\newtheorem{theorem}{Theorem}[section]
\newtheorem{corollary}{Corollary}[theorem]
\newtheorem{lemma}[theorem]{Lemma}
\newenvironment{proof}{\paragraph{Proof:}}{\hfill$\square$}
\newcommand{\funcname}[1]{{\texttt{#1}}}
\title{Doubly Stochastic Graph-based Non-autoregressive Reaction Prediction}
\author{
Ziqiao Meng $^{1}$\thanks{This work is done when Ziqiao Meng worked as an intern in Tencent AI Lab.}
\and Peilin Zhao $^{2}$\thanks{Corresponding authors: Peilin Zhao, and Irwin King.}
\and Yang Yu $^{2}$
\And Irwin King $^{1}$\footnotemark[2]
\affiliations
$^1$The Chinese University of Hong Kong\\
$^2$Tencent AI Lab\\
\emails
\{zqmeng, king\}@cse.cuhk.edu.hk,
mazonzhao@tencent.com,
kevinyyu@tencent.com
}
\begin{document}

\maketitle

\begin{abstract}
Organic reaction prediction is a critical task in drug discovery. Recently, researchers have achieved non-autoregressive reaction prediction by modeling the redistribution of electrons, resulting in state-of-the-art top-1 accuracy, and enabling parallel sampling. However, the current non-autoregressive decoder does not satisfy two essential rules of electron redistribution modeling simultaneously: the electron-counting rule and the symmetry rule. This violation of the physical constraints of chemical reactions impairs model performance. In this work, we propose a new framework called \textit{ReactionSink} that combines two doubly stochastic self-attention mappings to obtain electron redistribution predictions that follow both constraints. We further extend our solution to a general multi-head attention mechanism with augmented constraints. To achieve this, we apply Sinkhorn's algorithm to iteratively update self-attention mappings, which imposes doubly conservative constraints as additional informative priors on electron redistribution modeling. We theoretically demonstrate that our \textit{ReactionSink} can simultaneously satisfy both rules, which the current decoder mechanism cannot do. Empirical results show that our approach consistently improves the predictive performance of non-autoregressive models and does not bring an unbearable additional computational cost.
\end{abstract}

\section{Introduction}

\begin{figure}[ht]
    \centering
    \includegraphics[width=0.48\textwidth]{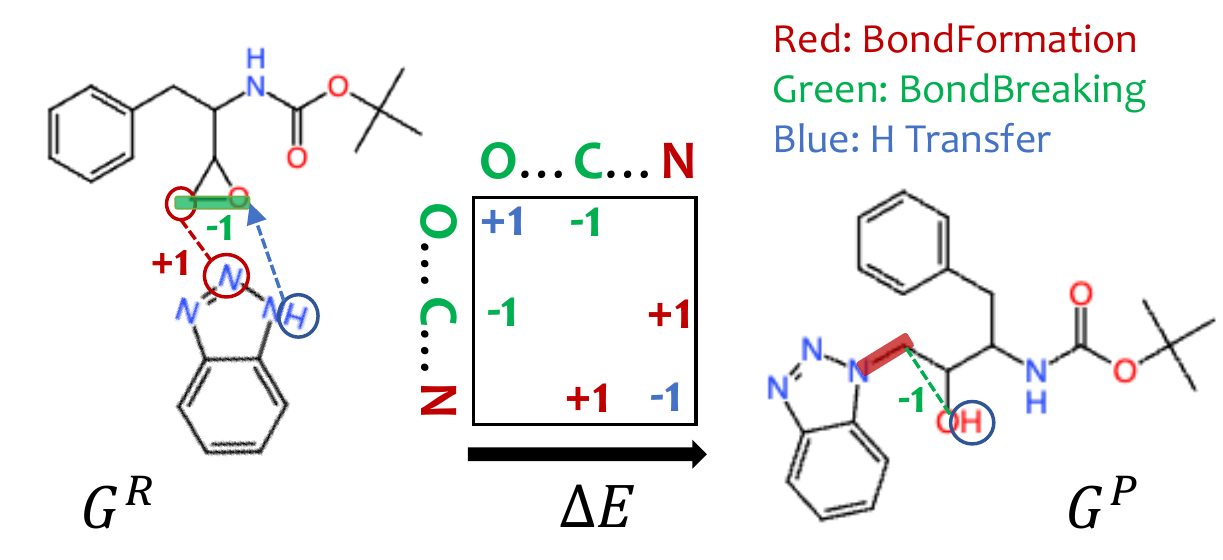}
    \caption{This figure illustrates chemical reaction from electron redistribution perspective. Red circles and arrows denote involved atoms that form new bond. Green arrow denotes the bond that breaks. Blue circles and arrows describe the hydrogen atom transfer. With electron redistribution $\Delta E$, reactants $G^{R}$ are transformed to products $G^{P}$ such that $E^{R}+\Delta E = E^{P}$, where $E^{R}$ and $E^{P}$ are the adjacency matrices of reactants and products respectively. Note that $E^{R}$, $E^{P}$ and $\Delta E$ does not consider Hydrogen atoms explicitly and $\Delta E$ records the change of attached hydrogen atoms in diagonal.}
    \label{fig:electron-transfer}
\end{figure}

Reaction prediction is a crucial task in computational chemistry. With a reliable prediction model, chemists can verify the accuracy of retrosynthetic routes and potentially uncover new chemical reactions, which could significantly benefit the drug discovery industry. However, given a set of reactants, the number of potential products can be combinatorially large, necessitating automatic tools to narrow down the search space. Consequently, the use of deep learning techniques to automatically infer possible reaction products has become prevalent and essential.

Translation-based autoregressive models \cite{moleculartransformer,Tetko,Graph2smiles,ReactionT5,chemformer,motif-reaction} have dominated the design of template-free end-to-end reaction models in recent years. These models represent molecules as sequences of characters, called SMILES \cite{SMILES}, and formulate the reaction prediction problem as a neural machine translation problem \cite{seq2seq,transformer,BERT}. This class of models achieves great predictive performance and does not rely on pre-computed atom-mapping information. However, autoregressive models have two major shortcomings. \textit{First}, autoregressive sampling is very inefficient because the model has to generate the final predictions step-by-step from intermediate parts. \textit{Second}, training autoregressive models requires pre-defined generation orders, while the absolute generation order of molecules is ambiguous.

To avoid these issues, a non-autoregressive model called NERF \cite{NERF} has been proposed. It predicts the redistribution of electrons, which is an important observation because chemical reactions can be reflected by bond reconnection, since the elements of the involved atoms remain unchanged. Bond reconnection inherently involves electron redistribution. For example, forming a single bond between a carbon atom and a nitrogen atom involves sharing a pair of electrons between them. Therefore, predicting electron redistribution $\Delta E$ is sufficient for inferring possible complete products. With a novel design for the electron redistribution decoder, NERF has achieved state-of-the-art top-1 accuracy and much faster sampling speed with parallel inference.

However, an important issue that has been neglected is that $\Delta \hat{E}$ must simultaneously follow the \textit{electron counting rule} and the \textit{symmetry rule} to better approximate $\Delta E$. The chemical reaction that adheres to both of these rules is illustrated in Figure~\ref{fig:electron-transfer}. We can clearly see that $\Delta E$ is row-wise conservative (conforming to the electron counting rule) and symmetric. Currently, the NERF decoder neglects to symmetrize $\Delta \hat{E}$, only ensuring that it follows the electron counting rule through combinations of \emph{BondBreaking} and \emph{BondFormation} self-attention matrices. This leads to inexact modeling of $\Delta E$ and impairs model performance. The predicted product adjacency matrix $\hat{E}^{P}$ is forced to be symmetric, but this operation implicitly corrupts the row-wise conservatism of $\Delta \hat{E}$.

In this work, we propose a novel framework called \textit{ReactionSink} to refine the decoder. The framework applies Sinkhorn's algorithm to iteratively update the \emph{BondBreaking} and \emph{BondFormation} self-attention matrices. With a sufficient number of iterations, each self-attention matrix converges to a doubly stochastic matrix. To generalize this solution to multi-head attention mechanisms, we impose augmented constraints on the Sinkhorn normalization operation.
We theoretically show that the NERF decoder leads to an implicit contradiction in preserving both physical constraints. In contrast, doubly stochastic self-attention mappings lead to valid $\Delta \hat{E}$, which preserve both rules simultaneously and are not corrupted by the symmetrization on $\hat{E}^{P}$. We further discuss the implicit connections between the reaction prediction problem and the discrete optimal transport problem to validate the rationality of imposing the doubly stochasticity constraint on the decoder as an information prior.
Comprehensive empirical results on the open benchmark dataset USPTO-MIT demonstrate that our approach consistently outperforms baseline non-autoregressive reaction prediction models. Our contributions can be summarized as follows:
\begin{itemize} 
    \item We introduce two chemical rules that the predicted electron redistribution $\Delta \hat{E}$ should follow and theoretically show that the current non-autoregressive decoder inevitably violates either of these chemical rules; 
    
    \item We propose a novel decoder design that conforms to both chemical rules through leveraging combinations of doubly stochastic self-attention mappings and we generalize our approach to multi-head attention mechanism; 
    
    \item We provide some new interpretations on electron redistribution modeling, which facilitates further research on potential connections between the reaction prediction problem and the optimal transport problem. 
\end{itemize} 

\section{Related Work}

\paragraph{Reaction Prediction.} In the early years, template-based methods were widely adopted for reaction prediction. Experts deduced possible products with the help of concluded reaction templates. Although these methods are reliable, they suffer from poor generalization. Quantum computation methods \cite{Quantum-chemistry,quantun-chem-curent-state,discover-sample} have also been adopted for reaction predictions, but they suffer from low computation speeds, despite being template-free. With the rise of deep learning, various deep learning architectures have been utilized in reaction modeling. Models that learn neural matching between reactions and templates have been proposed \cite{NeuralSym,SeglerW16}. WLDN \cite{WLDN} proposes a template-free model that splits reaction prediction into a reaction center identification stage and candidate ranking stage. MEGAN and GTPN \cite{MEGAN,GTPN} model reactions as a sequence of graph edits, while Electro \cite{elector} regards reactions as a sequence of electron transfers. A major class of models are translation-based reaction prediction models \cite{moleculartransformer,Tetko,Graph2smiles,ReactionT5,chemformer,motif-reaction}, which mainly transfer techniques from language modeling to reaction modeling. NERF \cite{NERF} refines the decoder to model electron redistribution and achieves non-autoregressive modeling with state-of-the-art top-1 accuracy and parallel inference.

\paragraph{Sinkhorn and Attention.} The Sinkhorn algorithm \cite{sinkhorn_1966,Sinkhorn-distance,computational-OT} is a well-studied tool for approximating solutions to the optimal transport problem. For instance, a new set pooling algorithm is presented in \cite{trainable-OT}, which embeds the optimal transport plan between input sets and reference sets, and then uses the Sinkhorn algorithm as a fast solver. Sinkhorn normalization is presented in \cite{manifold-bi-stochastic,spectral-diffusion}, and its convergence properties to a doubly stochastic matrix have been theoretically validated. SparseMAP \cite{sparseMAP} uses doubly stochastic attention matrices in LSTM-based encoder-decoder networks. Sinkhorn updates are also applied for differentiable ranking over internal representations in \cite{ranking-sinkhorn}. Gumbel-Sinkhorn \cite{Gumbel-Sinkhorn} applies the Sinkhorn algorithm to learn stochastic maximization over permutations. The Sinkhorn algorithm is used to iteratively generate doubly stochastic matrices for approximating smooth filters in \cite{symmetry-filter}. It has also been applied to refine self-attention mechanisms in \cite{transformer}. The Sparse Sinkhorn Transformer \cite{sparse-sinkhorn-attention} learns sparse self-attention by introducing a sorting network that generates a doubly stochastic matrix to permute input sequence elements. Sinkformer \cite{sinkformer} applies the Sinkhorn algorithm to derive doubly stochastic transformers.

\section{Proposed Methods}

\begin{figure*}[!htp]
\centering
\includegraphics[width=1.01\textwidth]{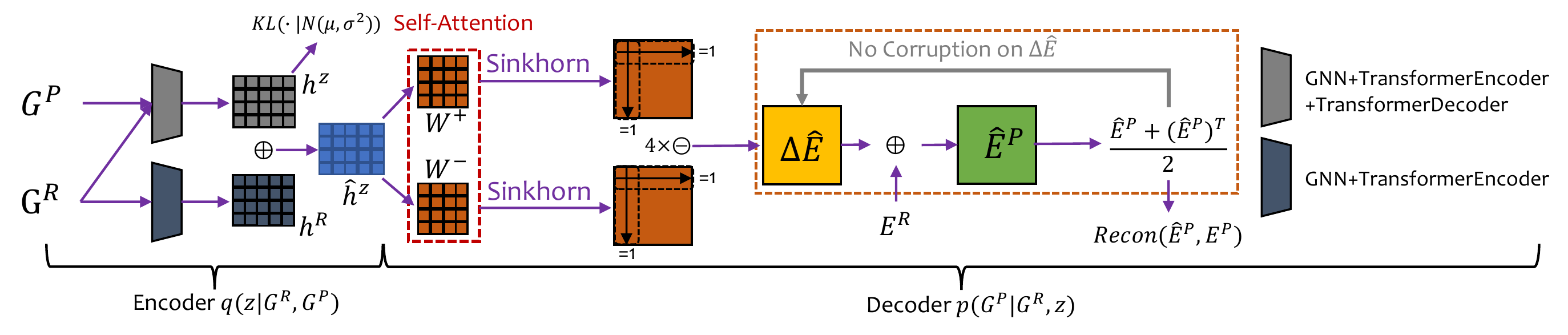}
\caption{This figure illustrates the training procedure of \textit{ReactionSink}. It adopts the conditional variational autoencoder framework with KL-divergence loss and reconstruction loss $||E^{P}-\hat{E}^{P}||_{2}^{2}$. Red rectangles highlight \emph{BondFormation} and \emph{BondBreaking} self-attention mappings. Orange rectangle highlights the process we aim to analyze. $\ominus$ denotes the element-wise subtraction, $\oplus$ denotes the element-wise plus. With Sinkhorn algorithm applied on \emph{BondFormation} and \emph{BondBreaking}, $\Delta \hat{E}$ becomes doubly conservative and would not be corrupted by symmetrization on $\hat{E}^{P}$. “4$\times$” denotes the multi-head attention with 4 heads.}
\label{fig:ReactionSink}
\end{figure*}

This section introduces the learning framework of our new non-autoregressive reaction prediction model, \textit{ReactionSink}. We first describe its problem formulation and general training process. Then we introduce its encoder architecture, which obtains the conditional reaction latent embedding. Most importantly, we present our novel decoder mechanism that leverages doubly stochastic self-attention matrices to obtain $\Delta E$. In the theoretical analysis section, we mathematically compare our decoder with the NERF \cite{NERF} decoder to demonstrate the advantages of our approach. Finally, we briefly discuss the new insights brought by \textit{ReactionSink}.

\paragraph{Problem Formulation.} A chemical reaction is represented as a pair $(G^{R}, G^{P})$, where $G^{R}=(V, E^{R}, f^{R})$ denotes the set of reactants and $G^{P}=(V, E^{P}, f^{P})$ denotes the set of products, $V$ is the set of $|V|$ atoms, $E^{R}$ and $E^{P}$ denote the adjacency matrix of $G^{R}$ and $G^{P}$ respectively, $f^{R}$ and $f^{P}$ denote the atom feature matrix of $G^{R}$ and $G^{P}$ respectively. Following the atom-mapping principle, reactants and products share the same set of atoms $V$. Unlike most of the papers representing $E$ as a 3D tensor, where $E\in \mathbb{R}^{|V|\times|V|}$ is a 2D matrix and $E_{ij}$ denotes the number of shared electrons between atom $i$ and atom $j$ (i.e. single bond is 1, double bond is 2). For special aromatic bond, we denote it as $E_{ij}=1.5$. The goal of reaction prediction is to predict the set of products $G^{P}$ given a set of reactants $G^{R}$. 

Following NERF \cite{NERF}, we also adopt the conditional variational autoencoder \cite{CVAE} (CVAE) architecture to approximate $p(G^{P}|G^{R})$ by introducing a latent variable $\mathbf{z}$. Instead of directly maximizing the log-likelihood $\log p(G^{P}|G^{R})$, CVAE is to maximize its evidence-lower bound (ELBO):
\begin{equation}
\begin{split}
\label{eq:NERF}
   ELBO= &\mathbb{E}_{q(\mathbf{z}|G^{P},G^{R})}[\log p(G^{P}|G^{R},\mathbf{z})] -\\ 
    &\hspace{-0.2in}\funcname{KL}(q(\mathbf{z}|G^{P},G^{R})||p(\mathbf{z}|G^{R}))\leq \log p(G^{P}|G^{R}) ,
\end{split}
\end{equation}
where $q(\mathbf{z}|G^{P},G^{R})$ is the reaction encoder with reaction $(G^{R},G^{P})$ as input and low-dimensional embedding $\mathbf{h}^{\mathbf{z}}$ as output, $p(G^{P}|G^{R},\mathbf{z})$ is product decoder with reactants $G^{R}$ and latent embedding $\mathbf{h}^{\mathbf{z}}$ as input, $p(\mathbf{z}|G^{R})$ denotes the prior distribution of latent variable $\mathbf{z}$. $\funcname{KL}$ term is minimizing the gap between $q(\mathbf{z}|G^{P},G^{R})$ and $p(\mathbf{z}|G^{R})$. In real implementations, the backbone network architectures of the encoder $q(\mathbf{z}|G^{P},G^{R})$ are graph neural networks (GNNs) and transformers. GNNs capture local interactions within each molecule and transformers capture global interactions across different molecules. With this architecture, $G^{R}$ and $G^{P}$ will be projected to reactants embedding $\mathbf{h}^{R}$ and products embedding $\mathbf{h}^{P}$ respectively. The cross attention layer is used for mapping $\mathbf{h}^{R}$ to latent $\mathbf{h}^{z}$ with $\mathbf{h}^{P}$ as teacher forcing during training. Finally, the latent conditional reaction embedding is derived such that $\hat{\mathbf{h}}^{\mathbf{z}}=\mathbf{h}^{R}+\mathbf{h}^{z}$. Note that similar reaction embedding networks are widely adopted in many related works as long as transformer-based reaction encoder architectures are used. Please refer the detailed architectures in Appendix 1 and the general architectures of \textit{ReactionSink} is illustrated in Figure~\ref{fig:ReactionSink}. 

\subsection{\textbf{Decoder with Physical Constraints}}

The core mechanism of the decoder $p(G^{P}|G^{R},\mathbf{z})$ is first decoding electron redistribution matrix $\Delta \hat{E}$ from conditional latent $\hat{\mathbf{h}}^{\mathbf{z}}$, and then add it to $E^{R}$ to predict $\hat{E}^{P}$, i.e., $\hat{E}^{P} = E^{R} + \Delta \hat{E}$. 
Before presenting \textit{ReactionSink} decoder in details, we first introduce two important physical rules, the \textit{electron-counting} rule and the \textit{symmetry} rule, which should be followed by the predicted $\Delta \hat{E}$.
However, these rules are not explicitly stated in the previous literature.
We will mathematically characterize these two rules as matrix properties.

\paragraph{Electron Counting Rule.} This is an important chemical rule of thumb: the octet rule. It states that main-group elements (carbon, nitrogen, oxygen, halogens) tend to bond in a way that each atom has eight electrons in its valence shell. Following this rule, electron redistribution of each atom is conservative, allowing it to maintain eight electrons in its valence shell. For example, when a carbon atom forms a new bond with one atom, it tends to break an existing bond with another atom in order to maintain eight electrons in its valence shell. This empirical rule leads to an important mathematical property of $\Delta E$, which is \textbf{row-wise conservative} (i.e., $\sum_{j}\Delta E_{ij} = 0$ for each atom $i$). Note that this rule is more informative than the charge conservation rule, which can be mathematically expressed as $\sum_{i,j}\Delta E_{ij} = 0$.

\paragraph{Symmetry Rule.} The ground-truth $\Delta E$ should clearly be symmetric. For example, if the edge number between atom $i$ and atom $j$ increases by 1, then the edge number between atom $j$ and atom $i$ should also increase by 1. To better approximate $\Delta E$, $\Delta \hat{E}$ should also be symmetric.

We have shown that $\Delta \hat{E}$ should preserve two mathematical properties which are $\sum_{j}\Delta \hat{E}_{ij} = 0$ for any $j$ and $\Delta \hat{E}_{ij} = \Delta \hat{E}_{ji}$ for any $i,j$. This can easily come up with a new conclusion that $\Delta \hat{E}$ should also be \textbf{column-wise conservative} such that $\sum_{i}\Delta \hat{E}_{ij} = 0$ for any $j$. Therefore, $\Delta \hat{E}$ should be both row-wise conservative and column-wise conservative in the strict sense, which is \textbf{doubly conservative}. Inspired by NERF decoder that leverages the combinations of row-wise stochastic self-attention mappings to achieve row-conservative $\Delta \hat{E}$, we achieve doubly conservative $\Delta \hat{E}$ by using combinations of two \textbf{doubly stochastic self-attention mappings}. Specifically, we decode two self-attention mappings \emph{BondFormation} $W^{+}$ and \emph{BondBreaking} $W^{-}$ from the conditional latent $\hat{\mathbf{h}}^{\mathbf{z}}$. \emph{BondFormation} $W^{+}$ is to predict bond addition and \emph{BondBreaking} $W^{-}$ is to predict bond removal. The detailed self-attention decoding mechanism is shown in Appendix 2. Then the combinations of $W^{+}$ and $W^{-}$ should satisfy the following equations: 
\begin{equation}
\label{eq:doublly-conservative}
\begin{split}
    (\sum_{j}W^{+}_{ij} - \sum_{j}W^{-}_{ij}) = \sum_{j}\Delta \hat{E}_{ij} = 0,\ \text{for any i,}\\
    (\sum_{i}W^{+}_{ij} - \sum_{i}W^{-}_{ij}) = \sum_{i}\Delta \hat{E}_{ij} = 0,\ \text{for any j.}
\end{split}
\end{equation}
To satisfy the above equations, we force $W^{+}$ and $W^{-}$ to be doubly stochastic such that:
\begin{equation}
\label{eq:doubly-stochastic}
\begin{split}
    \sum_{j}W^{+}_{ij} = 1, \sum_{j}W^{-}_{ij} = 1\ \text{for any i,}\\
    \sum_{i}W^{+}_{ij} = 1, \sum_{i}W^{-}_{ij} = 1\ \text{for any j.}
\end{split}
\end{equation}
Note that forcing $W^{+}$ and $W^{-}$ to be doubly stochastic is probably not the only solution to Eq~\ref{eq:doublly-conservative}. But stochasticity (sum to 1) is the most straightforward solution and more compatible with normalization operations. Therefore, if $W^{+}$ and $W^{-}$ can be doubly stochastic matrices as stated in Eq~\ref{eq:doubly-stochastic}, their combinations can lead to a doubly conservative $\Delta \hat{E}$. Then we can apply symmetrization operation on $\Delta \hat{E}$ such that $\frac{\Delta \hat{E}+(\Delta \hat{E})^{T}}{2}$, which is averaging predictions of $\Delta \hat{E}_{ij}$ and $\Delta \hat{E}_{ji}$ for each $i,j$. Interestingly, symmetrization in this case would not corrupt the conservative of $\Delta \hat{E}$. We will theoretically validate these claims in the next section. 

Unfortunately, decoding doubly stochastic self-attention mappings from conditional latent variables is non-trivial. Recall that the vanilla self-attention mapping is obtained as $\funcname{SoftMax}(\exp(QK^{T}))$, where $Q$ and $K$ are queries and keys linearly projected from latent variable $\hat{\mathbf{h}}^{\mathbf{z}}$. Note that the $\funcname{SoftMax}$ operation is basically a row-wise normalization technique that forces self-attention mappings to be row-wise stochastic, while it is not necessarily column-wise stochastic. Therefore, a specific operator is required to extend self-attention matrices, called \emph{Bondformation} $W^{+}$ and \emph{Bondbreaking} $W^{-}$, to doubly stochastic matrices. Furthermore, this operator should be differentiable so that it can be naturally merged into decoder neural networks.

\paragraph{Sinkhorn's Algorithm.} An iterative algorithm to convert positive matrix to doubly stochastic matrix is Sinkhorn's algorithm. The main idea of Sinkhorn's algorithm is alternatively applying row-wise normalization and column-wise normalization on the given positive matrix. Given a positive square matrix $X\in \mathbb{R}^{N\times N}$, we define the Sinkhorn operator as follows:
\begin{equation}
\label{eq:sinkhorn}
\begin{split}
    S^{0}(X) &= \exp(X),\\
    S^{l}(X) &= T_{c}(T_{r}(S^{l-1}(X))),\\
    S^{\infty}(X) &= \text{lim}_{l\rightarrow \infty}S^{l}(X),
\end{split}
\end{equation}
where $T_{r}$ denotes the row-wise normalization and $T_{c}$ denotes the column-wise normalization.
They are formally described as follows: 
\begin{equation}
\label{eq:norm}
\begin{split}
    T_{r}(X) &= X\oslash(X \mathbf{1}_{N} \mathbf{1}_{N}^{T}),\\
    T_{c}(X) &= X\oslash(\mathbf{1}_{N}\mathbf{1}_{N}^{T}X),
\end{split}
\end{equation}
where $\oslash$ denotes the element-wise division operator and $\mathbf{1}$ denotes the all one vector. We only need to apply the above normalization operations alternatively with enough number of iterations. In real implementation, to stabilize the updating process, we put the computation of Eq~\ref{eq:norm} in log domain. Details of this implementation are shown in the Appendix 6.

Sinkhorn \cite{sinkhorn_1966} proves that $S^{\infty}(X)$ belongs to the Birkhoff polytope, the set of doubly stochastic matrices, such that $S^{\infty}(X)\mathbf{1}_{N}=\mathbf{1}_{N}$ and $S^{\infty}(X)^{T}\mathbf{1}_{N}=\mathbf{1}_{N}$. Hence, applying this Sinkhorn operator on $W^{+}$ and $W^{-}$ separately will make them converge to doubly stochastic matrices. This Sinkhorn operator can be easily coupled with the self-attention mechanism as an additional layer. In addition, it can be perfectly suited for backpropagation updates of neural networks. Specifically, this Sinkhorn operator can be stacked after the vanilla self-attention, since the vanilla self-attention mapping can be regarded as $T_{r}(\exp(QK^{T}))$, which is equivalent to the initialization step plus the row-wise normalization step in Eq~\ref{eq:sinkhorn}. Therefore, in real implementations, we start with the column-wise normalization $T_{c}$ after the $\funcname{SoftMax}$ operation in the vanilla self-attention. 

Currently, each entry of $\Delta \hat{E}$ is in range $[-1,1]$ while edge number changes more than 1 in many cases. Therefore, we apply multi-head attention mechanism such that $\Delta \hat{E} = \sum_{d=1}^{4}W^{+d} - \sum_{d=1}^{4}W^{-d}$ with $d=4$ attention heads. In this way, $\Delta \hat{E}_{ij}$ is in range $[-4, 4]$, which could cover enough support of edge number changes of chemical reactions. We only need to manipulate Eq~\ref{eq:doubly-stochastic} a little bit by increasing conservation constraints to 4 and impose this augmented constraint on the sum of attention heads $\sum_{d=1}^{4}W^{+d}$ and $\sum_{d=1}^{4}W^{-d}$ respectively as follows: 
\begin{equation}
\begin{split}
    \sum_{d=1}^{4}\sum_{j}W^{+d}_{ij} = 4, \sum_{d=1}^{4}\sum_{j}W^{-d}_{ij} = 4\ \text{for any i,}\\
    \sum_{d=1}^{4}\sum_{i}W^{+d}_{ij} = 4, \sum_{d=1}^{4}\sum_{i}W^{-d}_{ij} = 4\ \text{for any j.}
\end{split}
\end{equation}
For extending this Sinkhorn normalization layer to the multi-head attention mechanism, we only need to impose the augmented constraints to both row-wise normalization and column-wise normalization such that: 
\begin{equation}
\begin{split}
    T_{r}(X) &= 4X\oslash(X \mathbf{1}_{N} \mathbf{1}_{N}^{T}),\\
    T_{c}(X) &= 4X\oslash(\mathbf{1}_{N}\mathbf{1}_{N}^{T}X).
\end{split}
\end{equation}
To conclude, with Sinkhorn's algorithm, we can efficiently convert $W^{+}$ and $W^{-}$ to doubly stochastic matrices, to finally achieve symmetric doubly conservative $\Delta \hat{E}$.

\section{Theoretical Analysis}

In this section, we first theoretically show that the NERF decoder cannot output valid $\Delta \hat{E}$ that follows the previously mentioned two physical constraints. In comparison, we mathematically validate that our \textit{ReactionSink} decoder can generate valid $\Delta \hat{E}$ and the symmetry is not contradictory to the conservative. 

In NERF decoder, it generates two vanilla self-attention mappings $W^{+}$ and $W^{-}$ based on conditional latent for \emph{BondFormation} and \emph{BondBreaking} respectively and approximates $\Delta \hat{E}$ through subtraction with multi-head attention such that $\Delta \hat{E} = \sum_{d=1}^{4}W^{+d} - \sum_{d=1}^{4}W^{-d}$. Then it applies symmetrization on the final predicted $\hat{E}^{P}=E^{R}+\Delta \hat{E}$ such that $\frac{E^{P}+(E^{P})^{T}}{2}$. Therefore, we begin to verify that $\Delta \hat{E}$ has also been implicitly symmetrized by NERF decoder, which is the following lemma:
\begin{lemma}
Symmetrization on the final predicted product adjacency matrix $\hat{E}^{P}$ is equivalent to implicit symmetrization on the predicted electron redistribution $\Delta \hat{E}$.
\end{lemma}
The proof procedures are shown in Appendix 2. 

Then we argue that NERF decoder design cannot satisfy the two proposed physical constraints simultaneously. Specifically, the symmetrization operation on $\hat{E}^{P}$ would result in that $\Delta \hat{E}$ violates the first rule. Since we have already shown that the symmetrization of $\hat{E}^{P}$ is equivalent to the implicit symmetrization of $\Delta \hat{E}$ in lemma~\ref{proof:symmetry-induction}, we only need to further prove the following lemma: 
\begin{lemma}
    Implicit symmetrization on $\Delta \hat{E}$ would result in that $\Delta \hat{E}$ violates the electron counting rule such that $\sum_{j}\Delta \hat{E}_{ij} \not= 0$ for some $i$. 
\end{lemma} 
The proof procedures are shown in Appendix 3. 

To summarize, electron redistribution predictions generated by NERF cannot satisfy both constraints simultaneously. Implicit symmetrization operation applied on $\Delta \hat{E}$ makes it violate the first rule. Otherwise it will violate the second rule if no symmetrization operation is adopted. In short, NERF decoder actually leads to a contradiction between two physical constraints. 

In comparison, with doubly stochastic self-attention mapping $W^{+}$ and $W^{-}$, \textit{ReactionSink} can conform to both physical rules concurrently. We can prove the following theorem:
\begin{theorem}
    If the self-attention mappings for bondformation $W^{+}$ and bondbreaking $W^{-}$ are doubly stochastic, then $\Delta \hat{E}$ would be guaranteed to be doubly conservative and its doubly conservation would not be corrupted by the symmetrization of $\hat{E}^{P}$. 
\end{theorem}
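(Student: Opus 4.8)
The plan is to prove the two clauses of the statement in turn: first that $\Delta \hat{E}$ is doubly conservative under the doubly stochastic hypothesis, and second that this property is preserved after symmetrizing $\hat{E}^{P}$.

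For the first clause I would argue directly from the hypothesis. Writing the multi-head prediction $\Delta \hat{E} = \sum_{d=1}^{4}W^{+d} - \sum_{d=1}^{4}W^{-d}$ and using the augmented row constraint $\sum_{d=1}^{4}\sum_{j}W^{+d}_{ij} = \sum_{d=1}^{4}\sum_{j}W^{-d}_{ij} = 4$, one gets $\sum_{j}\Delta \hat{E}_{ij} = 4 - 4 = 0$ for every $i$; the same computation with the augmented column constraint gives $\sum_{i}\Delta \hat{E}_{ij} = 0$ for every $j$. In matrix form this says $\Delta \hat{E}\,\mathbf{1}_{N} = \mathbf{0}$ and $\mathbf{1}_{N}^{T}\Delta \hat{E} = \mathbf{0}^{T}$, i.e.\ $\Delta \hat{E}$ is doubly conservative. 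The single-head case of Eq.~\eqref{eq:doubly-stochastic} is the specialization in which the constant $4$ is replaced by $1$, so it is covered by the same computation.

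For the second clause I would invoke Lemma~1, which states that symmetrizing $\hat{E}^{P} = E^{R} + \Delta \hat{E}$ amounts to replacing $\Delta \hat{E}$ by its symmetrization $\tfrac{1}{2}\bigl(\Delta \hat{E} + (\Delta \hat{E})^{T}\bigr)$; the $E^{R}$ part is untouched because an adjacency matrix is already symmetric. It then remains only to check that this symmetrized matrix is still doubly conservative, and the key structural fact is that the set of doubly conservative matrices is closed under transposition — the two conditions $X\mathbf{1}_{N}=\mathbf{0}$ and $\mathbf{1}_{N}^{T}X=\mathbf{0}^{T}$ are merely interchanged by $X \mapsto X^{T}$ — and closed under linear combinations. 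Hence from $\Delta \hat{E}\,\mathbf{1}_{N}=\mathbf{0}$ and $(\Delta \hat{E})^{T}\mathbf{1}_{N}=\mathbf{0}$ we obtain $\tfrac{1}{2}\bigl(\Delta \hat{E}+(\Delta \hat{E})^{T}\bigr)\mathbf{1}_{N}=\mathbf{0}$, and symmetrically the column sums vanish, so the symmetrized $\Delta \hat{E}$ remains doubly conservative. This is precisely the contrast with the NERF decoder treated in Lemma~2: there $\Delta \hat{E}$ is only row-conservative, so its transpose is only column-conservative, and their average satisfies neither condition exactly.

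I do not expect a genuine obstacle: once the doubly stochastic hypothesis is granted the argument is elementary linear algebra. The only points deserving care are the appeal to Lemma~1 to move the symmetrization from $\hat{E}^{P}$ onto $\Delta \hat{E}$, the remark that $E^{R}$ is symmetric and therefore drops out, and stating explicitly that ``doubly conservative'' means the row-sum and column-sum conditions hold \emph{simultaneously}, which is what makes closure under transposition the decisive property that the NERF decoder's merely row-stochastic maps do not enjoy.
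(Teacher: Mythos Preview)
Your proposal is correct and follows essentially the same approach as the paper: invoke Lemma~1 to transfer the symmetrization from $\hat{E}^{P}$ to $\Delta \hat{E}$, then use the doubly stochastic hypothesis on $W^{+}$ and $W^{-}$ to show that both the row sums and column sums of the symmetrized $\Delta\hat{E}$ vanish. The only cosmetic differences are that the paper works element-wise (computing $\tfrac{1}{2}(1-1+1-1)=0$ directly) rather than in your matrix form with closure under transposition, and that the paper reserves the multi-head version with constant $4$ replaced by $D$ for the separate Corollary, whereas you fold it into the same argument.
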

The proof procedures are shown in Appendix 4.

To extend the above conclusion to multi-head attention mechanism. The following corollary can be easily proved:
\begin{corollary}
    If the multi-head self-attention mechanisms for bondformation $\sum_{d=1}^{D}W^{+d}$ and bondbreaking $\sum_{d=1}^{D}W^{-d}$ are doubly conservative with $D$-sum constraints, then $\Delta \hat{E}$ would be guaranteed to be doubly conservative and its doubly conservation would not be corrupted by the symmetrization of $\hat{E}^{P}$. 
\end{corollary}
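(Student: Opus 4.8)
The plan is to reduce the multi-head case to the single-head argument of the preceding theorem by working with the two aggregated attention matrices rather than with the individual heads. Concretely, I would set $A := \sum_{d=1}^{D} W^{+d}$ and $B := \sum_{d=1}^{D} W^{-d}$, so that $\Delta\hat{E} = A - B$. The hypothesis that the multi-head mechanisms for bondformation and bondbreaking are doubly conservative with $D$-sum constraints is precisely the statement that $A\mathbf{1}_N = D\mathbf{1}_N$, $A^{T}\mathbf{1}_N = D\mathbf{1}_N$, and the same two identities for $B$. Note that this is strictly weaker than requiring each individual head $W^{\pm d}$ to be doubly stochastic, which is why the corollary is the correct generalization of the theorem and why the argument must be phrased at the level of the sums $A$ and $B$.

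First I would establish double conservation of $\Delta\hat{E}$. Using linearity of the row-sum and column-sum operators (equivalently, right/left multiplication by $\mathbf{1}_N$), one gets $\Delta\hat{E}\,\mathbf{1}_N = A\mathbf{1}_N - B\mathbf{1}_N = D\mathbf{1}_N - D\mathbf{1}_N = \mathbf{0}$, and likewise $(\Delta\hat{E})^{T}\mathbf{1}_N = A^{T}\mathbf{1}_N - B^{T}\mathbf{1}_N = \mathbf{0}$. Hence every row and every column of $\Delta\hat{E}$ sums to zero, i.e.\ $\Delta\hat{E}$ is doubly conservative. This is exactly the computation in the proof of the theorem, with the doubly stochastic matrices $W^{+},W^{-}$ replaced by the $D$-conservative aggregates $A,B$; the constant $1$ is simply replaced by $D$, and the difference still cancels cleanly.

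Second I would show that symmetrizing $\hat{E}^{P}$ does not corrupt this property. By Lemma~1 (the symmetrization-equivalence lemma), symmetrizing the predicted product adjacency matrix $\hat{E}^{P}$ is equivalent to replacing $\Delta\hat{E}$ with $\tfrac{1}{2}\bigl(\Delta\hat{E} + (\Delta\hat{E})^{T}\bigr)$. Transposition exchanges row sums with column sums, so if $\Delta\hat{E}$ is doubly conservative then so is $(\Delta\hat{E})^{T}$; and the set of doubly conservative matrices is closed under convex combination (again by linearity of $\mathbf{1}_N$-multiplication). Therefore $\tfrac{1}{2}\bigl(\Delta\hat{E} + (\Delta\hat{E})^{T}\bigr)\mathbf{1}_N = \tfrac{1}{2}(\mathbf{0}+\mathbf{0}) = \mathbf{0}$, and symmetrically for columns, so the symmetrized electron-redistribution matrix remains doubly conservative. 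Combining the two parts yields the corollary.

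I do not expect a genuine obstacle: once the theorem and the symmetrization lemma are available, the corollary is a short linearity argument. The only point requiring care is bookkeeping — using the hypothesis in the form ``the aggregate $\sum_{d} W^{\pm d}$ has all row and column sums equal to $D$'' (which is exactly what the augmented Sinkhorn normalization $T_r(X)=DX\oslash(X\mathbf{1}_N\mathbf{1}_N^{T})$, $T_c(X)=DX\oslash(\mathbf{1}_N\mathbf{1}_N^{T}X)$ enforces on the head-sum), rather than assuming anything about individual heads, and checking that the two copies of the constant $D$ cancel exactly in the subtraction so that no residual scaling leaks into $\Delta\hat{E}$.
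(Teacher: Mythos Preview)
Your proposal is correct and follows essentially the same approach as the paper's proof: both use the $D$-sum constraints on the aggregated head-sums, expand $\Delta\hat{E}$ as the difference of the two aggregates, and verify by linearity that the row and column sums of the (symmetrized) $\Delta\hat{E}$ vanish because the two copies of $D$ cancel. The only cosmetic difference is that you package the argument in matrix form via $A=\sum_d W^{+d}$, $B=\sum_d W^{-d}$ and split it into ``$\Delta\hat{E}$ is doubly conservative'' followed by ``symmetrization preserves this,'' whereas the paper works element-wise and directly computes $\sum_j \mathrm{Symm}(\Delta\hat{E})_{ij}=\tfrac{1}{2}(D-D+D-D)=0$ in one pass.
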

The proof procedures are shown in Appendix 5.

\subsection{Further Discussions}

Our novel framework, \textit{ReactionSink}, provides new insights into the non-autoregressive reaction prediction problem. Doubly stochastic matrices usually have special meaning in mathematics, with the most well-known being the permutation matrix. This suggests that electron redistribution can be formulated as a learned permutation of electrons. Note that permuting a matrix is a multiplication operation, which is different from the current formulation. The permutation formulation is intuitively closer to the discrete nature of electron redistribution, but the detailed formulation method requires further research.
More generally, reaction prediction can be connected with the optimal transport (OT) problem. The permutation problem is the most elementary discrete OT problem. Specifically, electron redistribution can also be regarded as allocating electrons to lower the energy of the entire reaction system, which is implicitly connected to the discrete OT problem. These new insights have not been revealed by existing work, and we present them here to inspire future research, although they are still in their nascent stages.  

\section{Experiments}

\begin{table*}[ht!]
    \centering
    \vskip 0.1in
    \resizebox{1.00\textwidth}{!}{%
    \begin{tabular}{c|c|c|c|c|c|c|c|c}
    \hline
       Category& Model & Top-1 & Top-2 & Top-3 & Top-5 & Top-10 & Parallel & End-to-end\\
       \hline
       Template-based & $\text{Symbolic}{\dagger}$ & 90.4 & 93.2 & 94.1 & 95.0 & - & $\checkmark$ & $\times$\\
       \hline
       Two-stage & $\text{WLDN}^{\dagger}$ & 79.6 & - & 87.7 & 89.2 & - & $\checkmark$ & $\times$ \\
       \hline
       \multirow{8}{5em}{Autoregressive} & $\text{GTPN}{\dagger}$ & 83.2 & - & 86.0 & 86.5 & - & $\times$ & $\checkmark$ \\
       & $\text{MT-base}^{\dagger}$ & 88.8 & 92.6 & 93.7 & 94.4 & 94.9 & $\times$ & $\checkmark$\\
       & $\text{MEGAN}^{\dagger}$ & 89.3 & 92.7 & 94.4 & 95.6 & 95.4 & $\times$ & $\checkmark$\\
       & $\text{MT}^{\dagger}$ & 90.4 & 93.7 & \textbf{94.6} & 95.3 & - & $\times$ & $\checkmark$ \\
       & $\text{Chemformer}^{\dagger}$ & 91.3 & - & - & 93.7 & 94.0 & $\times$ & $\checkmark$ \\
       & $\text{Sub-reaction}^{\dagger}$ & 91.0 & - & 94.5 & 95.7 & - & $\times$ & $\checkmark$\\
       & $\text{Graph2Smiles}^{\dagger}$ & 90.3 & - & 94.0 & 94.8 & 95.3 & $\times$ & $\checkmark$\\
       & $\text{AT}\times100^{\dagger}$ & 90.6 & \textbf{94.4} & - & \textbf{96.1} & - & $\times$ & $\checkmark$\\ 
       \hline
       \multirow{2}{5em}{Non-autoregressive} & NERF & 90.7 & 92.3 & 93.3 & 93.7 & 94.0 & $\checkmark$ & $\checkmark$ \\
       & ReactionSink & \textbf{91.3} & 93.3 & 94.0 & 94.5 & 94.9 & $\checkmark$ & $\checkmark$\\
       \hline
    \end{tabular}%
    }
    \caption{Top-K Accuracy \% on USPTO-479K with original random split. Best results are bolded. $\dagger$ indicates that the reported results are copied from the corresponding published papers. We also show the category of each model to reflect its property. “Parallel” indicates whether the model is capable of parallel inference. “End-to-end” indicates whether the model can make end-to-end inference.} 
    \label{tab:random-split}
\end{table*}

\label{experiments} 
\paragraph{Dataset.} Following previous work, we evaluate our approach on the open public benchmark dataset USPTO-MIT \cite{WLDN}, which contains 479K reactions filtered by removing duplicates and erroneous reactions from Lowe's original data \cite{Lowe}. An exisitng work discovers that about 0.3\% of reactions in USPTO-479K do not satisfy the non-autoregressive learning settings for various reasons~ \cite{NERF}. Following this work \cite{NERF}, we also filter 0.3\% reactions from dataset and deduct 0.3\% from the top-k accuracies of our model correspondingly. 

\paragraph{Experimental Setting.} We conduct experiments on three different splits of reaction prediction, which are random split, tanimoto-0.4 split and tanimoto-0.6 split. Random split is adopted by most of the previous work. Scaffold splits, tanimoto-0.4 split and tanimoto-0.6 split, are adopted by \cite{reactionattr} to test the generalization of reaction model with larger distribution shift between training and testing. Tanimoto similarity is measuring whether two reactions are structurally similar. Higher tanimoto index indicates that two reactions are more similar and otherwise two reactions are dissimilar with each other (e.g. Tanimoto-0.4 has larger distribution gap between training and testing). For original random split, the training set, validation set and testing set follows split ratio 409K:30K:40K. For scaffold split, the split ratio is 392K:30K:50K. 

\paragraph{Comparison Baselines.} We compare \textit{ReactionSink} with the following baseline models: \textbf{WLDN} \cite{WLDN} is a two-stage model, which firstly predicts reaction centers and then ranks enumerated products; \textbf{GTPN} \cite{GTPN} models reaction predictions as a series of graph transformations and use policy networks to learn the transformations; \textbf{MT-base} \cite{moleculartransformer} is a transformer-based autoregressive modeling with both input and output in SMILES sequence format; \textbf{MEGAN} \cite{MEGAN} models reaction predictions as series of graph edit operations and generates edit sequences in an autoregressive manner; \textbf{MT} \cite{moleculartransformer} is MT-base models with data augmentation techniques applied to SMILES sequence input; \textbf{Symbolic} \cite{symbolic} introduces the chemical rules to reaction modeling using symbolic inference; \textbf{Chemformer} \cite{chemformer} leverages molecular SMILES encoder pretrained on 100M molecular datasets with three self-supervised tasks; \textbf{Graph2Smiles} \cite{Graph2smiles} leverages the similar backbone network of \textbf{NERF} \cite{NERF}; \textbf{AT$\times$100} \cite{Tetko} leverages molecular transformer with $\times$100 SMILES augmentations; \textbf{Sub-reaction} \cite{motif-reaction} leverages motif tree to achieve substructure-aware reaction prediction; \textbf{NERF} \cite{NERF} uses CVAE to achieve non-autoregressive modeling; 

\paragraph{Model Configuration and Reproducibility Setting.} The major encoder and decoder architectures are following NERF \cite{NERF}. To ensure fair comparison with major baseline models Molecular Transformer and NERF, we set the number of transformer encoder layers and transformer decoder layers (cross-attention layer) to be 4, the same as previous work. And we set the dimension of latent embedding to be 256. For multi-head attention decoder, \emph{BondFormation} and \emph{BondBreaking} both have 4 attention heads. The model is optimized using Adam optimizer \cite{Adam} at learning rate $10^{-4}$ with linear warm-up and linear learning rate decay. The number of iterations $l$ of Sinkhorn normalization is also an hyperparameter to fine-tune. Larger $l$ indicates more rounds of Sinkhorn normalization. Finally, we train our \textit{ReactionSink} for 100 epochs with a batch size of 128 using 8 Nvidia V100 GPUs in this work.

\paragraph{Evaluation Metrics.} Following the convention of previous work, we adopt the top-k accuracies to evaluate the performance of all the compared algorithms. The top-k accuracy is the percentage of reactions that have the ground-truth product in the set of the top-k predicted molecules. As long as the set of predicted products contains the ground-truth main product, then the prediction would be counted as a correct one. In this work, the value of $k$ is set as 5 different values: $\lbrace 1,2,3,5,10 \rbrace$. 

\paragraph{Sampling Top-k Predictions.} Since \textit{ReactionSink} follows CVAE architecture, multi-modal outputs are generated through sampling $k$ different latent vectors $\hat{h}^{z}$ by increasing temperature values. To sample top-k predictions, we multiply a scalar temperature parameter $t$ to the variance of standard Gaussian distribution, such that $h^z$s are sampled from different $N(\textbf{0}, t\textbf{I})$. Increasing the temperature value $t$ would make the model predict different products. Specifically, we sample the first k predictions as our top-k predictions. Lower temperatures are set to output predictions with higher rank (e.g. prediction using $t=1$ is treated as the top-1 prediction). 
\begin{table}[t] 
    \centering
    \vskip 0.2in
    \begin{tabular}{cccc}
       \hline
       Model Name & Top-1 & Top-3 & Top-5 \\
       \hline
       WLDN5 & 75.9 & 86.2 & 88.8\\
       MT & 80.9 & \textbf{88.2} & \textbf{89.6}\\
       NERF & 85.0 & 86.7 & 88.8 \\
       ReactionSink & \textbf{86.0} & 87.2 & 89.3\\
       \hline
    \end{tabular}
    \caption{Top-K Accuracy \% on USPTO-479K with Tanimoto Similarity $<$ $0.6$ (left)}
    \label{tab:tanimoto-0.4}
\end{table}

\begin{table}[t]
    \centering
    \vskip 0.2in
    \begin{tabular}{cccc}
       \hline
       Model Name & Top-1 & Top-3 & Top-5 \\
       \hline
       WLDN5 & 69.3 & 80.9 & 84.1\\
       MT & 74.6 & 82.9 & 84.5\\
       NERF & 80.0 & 82.5 & 84.4\\
       ReactionSink & \textbf{82.2} & \textbf{83.2} & \textbf{84.5}\\
       \hline
    \end{tabular}
    \caption{Top-K Accuracy \% on USPTO-479K with Tanimoto Similarity $<$ $0.4$ (right)}
    \label{tab:tanimoto-0.6}
\end{table}

\begin{table}[t]
    \centering
    \vskip 0.2in
    \begin{tabular}{cccc}
       \hline
       Model Name & Wall-time & Latency & Speedup \\
       \hline
       Transformer (b=5) & 9min & 448ms & 1$\times$\\
       MEGAN (b=10) & 31.5min & 144ms & 0.29$\times$\\
       Symbolic & $>$7h & 1130ms & 0.02$\times$\\
       NERF & 20s & 17ms & 27$\times$\\
       ReactionSink & 24s & 20ms & 26$\times$\\
       \hline
    \end{tabular}
    \caption{Computation speedup (compared with Transformer)}
    \label{tab:speed-up}
\end{table}

\begin{table}[t]
    \centering
    \vskip 0.2in
    \begin{tabular}{cccc}
       \hline
       Model Name & Top-1 & Top-3 & Top-5 \\
       \hline
       ReactionSink (l=1) & 90.9 & 93.2 & 93.7\\
       ReactionSink (l=2) & 91.0 & 93.3 & 94.0\\
       ReactionSink (l=3) & \textbf{91.3} & 93.7 & 94.4\\
       ReactionSink (l=5) & \textbf{91.3} & 94.0 & 94.5\\
       ReactionSink (l=10) & 91.3 & 94.0 & 94.5\\
       \hline
    \end{tabular}
    \caption{Ablation studies on the number of iterations $l$ of Sinkhorn normalization}
    \label{tab:ablation-studies}
\end{table}

\paragraph{Main Experiments with Random Split and Tanimoto Splits.} We conduct experiments under random split and tanimoto splits, which can be seen as a cross-validation process. Table~\ref{tab:random-split} is the major benchmark adopted by previous work under random split conducted by WLDN \cite{WLDN}. From this table, we can see that our method reaches the state-of-the-art top-1 accuracy over all baseline methods and consistently outperforms the non-autoregressive model NERF with top-k accuracies. Note that these results are evaluated on models without knowing reagents, which means these models do not know which reactant is reagent during inference stage. We can see that currently non-autoregressive models are performing worse than autoregressive models when $k$ is higher. We conjecture that this may be caused by the simple sampling method for CVAE, or simple latent distribution assumption on uncertainty modeling (Gaussian distribution). 
In the future, it is worth to find better sampling method for CVAE, and explore stronger generative model than CVAE for reaction modeling. 

From Table~\ref{tab:tanimoto-0.4} and Table~\ref{tab:tanimoto-0.6}, we can observe that non-autoregressive models have stronger generalization than autoregressive models. We conjecture that electron redistribution modeling proposed by NERF \cite{NERF} is closer to the nature of this problem. Under tanimoto splits, \textit{ReactionSink} consistently improves the performance of non-autoregressive models in terms of top-k accuracies. 

\paragraph{Ablation Studies on Sinkhorn Iteration.} Since the Sinkhorn algorithm is also an iterative normalization process, we conduct ablation studies on the number of iterations $l$ to check its effect on model performance. From table~\ref{tab:ablation-studies}, we can see that the model performance is consistently improved when increasing $l$ from 1 to 5. However, when $l$ increases to 10, it does not bring further performance gains compared to $l=3$. This demonstrates that the \emph{Bondformation} and \emph{BondBreaking} matrices are converted to doubly stochastic matrices with few iterations of the Sinkhorn normalization. 

\paragraph{Computational Efficiency.} The main concern about \textit{ReactionSink} is its computational complexity, which is caused by the additional layers of Sinkhorn normalization. Fortunately, with a few iterations of Sinkhorn normalization, the self-attention matrices will converge to doubly stochastic matrices with negligible error. Therefore, the computational burden would not be significantly increased due to additional normalization.
Following NERF, we report the wall-time and latency of the inference model. Wall-time is the total time cost for inferring all testing samples, and latency is the inference time cost for a single testing sample. The detailed wall-time and latency computation standards are stated in NERF \cite{NERF}. From Table~\ref{tab:speed-up}, we can see that the proposed method only adds a few additional seconds to Wall-time and a few more milliseconds to latency. This demonstrates that the additional normalization layers do not create a trade-off between accuracy and speed.
Regarding the training process, additional Sinkhorn normalization only increases the training time by a few seconds for each epoch, which is almost negligible compared to the total training time. Overall, \textit{ReactionSink} is more suitable for high-throughput reaction predictions than auto-regressive methods.

\section{Conclusion}

In this work, we first introduce two chemical rules that govern electron distribution and characterize their corresponding mathematical properties. We then identify that the current electron redistribution decoder does not maintain these two physical constraints simultaneously, and we theoretically prove our claims. To address this issue, we propose the \textit{ReactionSink} architecture to extend the current self-attention mapping to doubly stochastic matrices. We also prove that the predicted electron distribution generated by our proposed methods better adheres to the physical constraints. Additionally, we establish the connections between electron redistribution and the learned optimal transport problem. This new perspective has the potential to lead to a new problem formulation for reaction prediction. Experimental results demonstrate that \textit{ReactionSink} consistently outperforms the state-of-the-art non-autoregressive reaction prediction model and does not require expensive computational costs.

\section{Acknowledgment}

The work described here was partially supported by grants from the National Key Research and Development Program of China (No. 2018AAA0100204) and from the Research Grants Council of the Hong Kong Special Administrative Region, China (CUHK 14222922, RGC GRF, No. 2151185)

\bibliographystyle{named}
\bibliography{ijcai23}

\newpage 

\section*{Appendix 1}

Each atom contains the following input features:

\begin{itemize}
    \item Atom types, an one-hot encoding indicating the atom types, such as: C, O, N, P, Cl and etc.; 
    
    \item Aromaticity, an one-hot encoding indicating the aromaticity of each atom;
    
    \item Electric charge, a 13-dimensional one-hot encoding indicating the electric charge of each atom (In range [-6, 6]); 
    
    \item Segment embedding, an one-hot encoding indicating which molecule that each atom belongs to;
    
    \item Reactant masking, an one-hot encoding indicating whether each molecule attend the reaction. This feature is only used in training and is not used in inference stage; (Reagents information.)
\end{itemize}

\paragraph{\textbf{Encoder Architecture}} Encoder architecture mainly adopts \textbf{GNN} and \textbf{transformer} to capture local and global interactions respectively. For GNN, we adopt a simple message passing mechanism by regarding each $E_{ij}$ value as weight. For transformer, we apply transformer encoder for molecular embedding and transformer decoder for applying cross attention between reactants and products. Specifically, encoding steps are as follows:
\begin{equation}
\label{eq:architecture}
\begin{split}
    \mathbf{h}^{R} &= E^{R}f^{R}, \mathbf{h}^{P} = E^{P}f^{P},\\
    \mathbf{h}^{R} &= \text{TransformerEncoder}(\mathbf{h}^{R}),\\
    \mathbf{h}^{P} &= \text{TransformerEncoder}(\mathbf{h}^{P}),\\
    \mathbf{h}^{\mathbf{z}} &= \text{TransformerDecoder}(\mathbf{h}^{R},\mathbf{h}^{P}),\\
    \hat{\mathbf{h}}^{\mathbf{z}} &= \mathbf{h}^{R} + \mathbf{h}^{\mathbf{z}},
\end{split}
\end{equation}
where $\mathbf{h}^{R},\mathbf{h}^{P},\mathbf{h}^{\mathbf{z}},\hat{\mathbf{h}}^{\mathbf{z}}\in \mathbb{R}^{|V|\times \text{dim}}$, $\text{dim}$ denotes the number of latent dimension. $\mathbf{h}^{\mathbf{z}}$ will be fed into $\text{KL}$ loss function to force its latent variable to follow standard Gaussian distribution. $\hat{\mathbf{h}}^{\mathbf{z}}$ is reactants conditional latent used for decoding. 

\section*{Appendix 2} 

\paragraph{\textbf{Self-Attention $W^{+}$ and $W^{+}$ decoding}} We decode $W^{+}$ and $W^{+}$ from conditional latent $\hat{\mathbf{h}}^{\mathbf{z}}$ through self-attention module as follows:
\begin{equation}
\label{eq:self-attention}
\begin{split}
    Q^{+} = W^{Q,+}\hat{\mathbf{h}}^{z}, K^{+} = W^{K,+}\hat{\mathbf{h}}^{\mathbf{z}},\\
    Q^{-} = W^{Q,-}\hat{\mathbf{h}}^{\mathbf{z}}, K^{-} = W^{K,-}\hat{\mathbf{h}}^{\mathbf{z}},\\
    W^{+} = \text{SoftMax}(\frac{Q^{+}(K^{+})^{T}}{\sqrt{d}}),\\
    W^{-} = \text{SoftMax}(\frac{Q^{-}(K^{-})^{T}}{\sqrt{d}}),\\
\end{split}
\end{equation}
where $Q^{+},Q^{-}$ are query matrices, $K^{+},K^{-}$ are key matrices, $W^{Q,+}, W^{Q,-}, W^{K,+}, W^{K,-}$ are trainable projection weight matrices, $\sqrt{d}$ is scaling factor where $d$ is equal to latent dimension. The above mechanism is actually same as the vanilla self-attention mechanism. 

\begin{lemma}
Symmetrization on the final predicted product adjacency matrix $\hat{E}^{P}$ is equivalent to implicit symmetrization on the predicted electron redistribution $\Delta \hat{E}$.
\end{lemma}

\begin{proof}
\label{proof:symmetry-induction}
    $Symm(\hat{E}^{P}) = \frac{\hat{E}^{P}+(\hat{E}^{P})^{T}}{2}$ 
    
    $= \frac{(E^{R}+\Delta \hat{E})+(E^{R}+\Delta \hat{E})^{T}}{2}$
    
    $= \frac{E^{R}+\Delta \hat{E}+(E^{R})^{T}+(\Delta \hat{E})^{T}}{2}$ 
    
    $= \frac{E^{R}+(E^{R})^{T}+\Delta \hat{E}+(\Delta \hat{E})^{T}}{2}$
    
    $= \frac{E^{R}+(E^{R})^{T}}{2} + \frac{\Delta \hat{E}+(\Delta \hat{E})^{T}}{2}$
    
    $= \frac{E^{R}+E^{R}}{2} + \frac{\Delta \hat{E}+(\Delta \hat{E})^{T}}{2}$ ($E^{R}$ is known to be symmetric) 
    
    $= E^{R}+\frac{\Delta \hat{E}+(\Delta \hat{E})^{T}}{2} = E^{R} + Symm(\Delta \hat{E})$
    
    Therefore, symmetrization on $\hat{E}^{P}$ in NERF is equivalent to implicit symmetrization on $\Delta \hat{E}$.
\end{proof}

\section*{Appendix 3}

\begin{lemma}
    Implicit symmetrization on $\Delta \hat{E}$ would result in that $\Delta \hat{E}$ violates the electron counting rule such that $\sum_{j}\Delta \hat{E}_{ij} \not= 0$ for some $i$. 
\end{lemma} 
\begin{proof}
\label{proof:NERF-violation}
    $\sum_{j}Symm(\Delta \hat{E})_{ij} = \sum_{j}(\frac{\Delta \hat{E}+(\Delta \hat{E})^{T}}{2})_{ij}$ (for a row with index $i$)
    
    $=\frac{1}{2}\sum_{j}((\Delta \hat{E})_{ij} + ((\Delta \hat{E})^{T})_{ij})$ 
    
    $=\frac{1}{2}\sum_{j}((\Delta \hat{E})_{ij} + (\Delta \hat{E})_{ji})$
    
    $=\frac{1}{2}(\sum_{j}(\Delta \hat{E})_{ij} + \sum_{j}(\Delta \hat{E})_{ji})$
    
    $=\frac{1}{2}(0 + \sum_{j}(\Delta \hat{E})_{ji})$ (row-wise conservative)
    
    $\not= 0$ (not column-wise conservative)
    
    Therefore, implicit symmetrization on $\Delta \hat{E}$ would corrupt the row-wise conservation of $\Delta \hat{E}$.
\end{proof}

\section*{Appendix 4}

\begin{theorem}
    If the self-attention mappings for bondformation $W^{+}$ and bondbreaking $W^{-}$ are doubly stochastic, then $\Delta \hat{E}$ would be guaranteed to be doubly conservative and its doubly conservation would not be corrupted by the symmetrization on $\hat{E}^{P}$. 
\end{theorem}

\begin{proof}
    In Lemma~\ref{proof:symmetry-induction}, we have already shown that symmetrization on $\hat{E}^{P}$ is equivalent to symmetrization on $\Delta \hat{E}$. 
    $\sum_{j}Symm(\Delta \hat{E})_{ij} = \sum_{j}(\frac{\Delta \hat{E}+(\Delta \hat{E})^{T}}{2})_{ij}$ (for a row with index $i$)
    
    $=\frac{1}{2}\sum_{j}((\Delta \hat{E})_{ij} + ((\Delta \hat{E})^{T})_{ij})$ 
    
    $=\frac{1}{2}\sum_{j}((\Delta \hat{E})_{ij} + (\Delta \hat{E})_{ji})$ 
    
    $=\frac{1}{2}(\sum_{j}(W^{+}-W^{-})_{ij} + \sum_{j}(W^{+}-W^{-})_{ji})$
    
    $=\frac{1}{2}(\sum_{j}W^{+}_{ij}-\sum_{j}W^{-}_{ij} + \sum_{j}W^{+}_{ji}-\sum_{j}W^{-}_{ji})$
    
    $=\frac{1}{2}(1-1+1-1)$ ($W^{+}$ and $W^{-}$ are doubly stochastic matrices)
    
    $= 0$ (row-wise conservative)
    
    Similarly, we can show that $\sum_{i}Symm(\Delta \hat{E})_{ij} = 0$ for any column $j$, which is column-wise conservative.
    
    Therefore, in this way, $\Delta \hat{E}$ is doubly conservative and its doubly conservative would not be corrupted by its implicit symmetrization, which is equivalent to symmetrization on $\hat{E}^{P}$. 
\end{proof}

\section*{Appendix 5}

\begin{corollary}
    If the multi-head self-attention mechanisms for bondformation $\sum_{d=1}^{D}W^{+d}$ and bondbreaking $\sum_{d=1}^{D}W^{-d}$ are doubly conservative with $D$-sum constraints, then $\Delta \hat{E}$ would be guaranteed to be doubly conservative and its doubly conservation would not be corrupted by the symmetrization on $\hat{E}^{P}$. 
\end{corollary}

\begin{proof}
    Since $\sum_{d=1}^{D}W^{+d}$ and $\sum_{d=1}^{D}W^{-d}$ are doubly conservative with D-sum constraints, such that $\sum_{j}\sum_{d=1}^{D}W^{+d}_{ij}=D$, $\sum_{i}\sum_{d=1}^{D}W^{+d}_{ij}=D$, $\sum_{j}\sum_{d=1}^{D}W^{-d}_{ij}=D$ and $\sum_{i}\sum_{d=1}^{D}W^{-d}_{ij}=D$.
    
    $\sum_{j}Symm(\Delta \hat{E})_{ij} = \sum_{j}(\frac{\Delta \hat{E}+(\Delta \hat{E})^{T}}{2})_{ij}$ (for a row with index $i$)
    
    $=\frac{1}{2}\sum_{j}((\Delta \hat{E})_{ij} + ((\Delta \hat{E})^{T})_{ij})$ 
    
    $=\frac{1}{2}\sum_{j}((\Delta \hat{E})_{ij} + (\Delta \hat{E})_{ji})$ 
    
    $=\frac{1}{2}(\sum_{j}(\Delta \hat{E})_{ij} + \sum_{j}(\Delta \hat{E})_{ji})$
    
    $=\frac{1}{2}(\sum_{j}(\sum_{d=1}^{D}W^{+d}-\sum_{d=1}^{D}W^{-d})_{ij} + \sum_{j}(\sum_{d=1}^{D}W^{+d}-\sum_{d=1}^{D}W^{-d})_{ji})$
    
    $=\frac{1}{2}(\sum_{j}\sum_{d=1}^{D}W^{+d}_{ij}-\sum_{j}\sum_{d=1}^{D}W^{-d}_{ij} + \sum_{j}\sum_{d=1}^{D}W^{+d}_{ji}-\sum_{j}\sum_{d=1}^{D}W^{-d}_{ji}$
    
    $=\frac{1}{2}(D-D+D-D)$ ($\sum_{d=1}^{D}W^{+d}$ and $\sum_{d=1}^{D}W^{-d}$ are doubly conservative with D-sum constraints.)
    
    $= 0$ (row-wise conservative)
    
    Similarly, we can show that $\sum_{i}Symm(\Delta \hat{E})_{ij} = 0$ for any column $j$, which is column-wise conservative.
    
    Therefore, in this way, $\Delta \hat{E}$ is doubly conservative and its doubly conservative would not be corrupted by its implicit symmetrization, which is equivalent to symmetrization on $\hat{E}^{P}$. Further, with this multi-head attention mechanism, $\Delta \hat{E}_{ij}$ is in range $[-D,D]$, in real implementations, $D=4$, which could cover enough support of electron changes in chemical reactions. 
\end{proof}

\section*{Appendix 6}

Given a positive matrix $S^{0}\in \mathbb{R}^{n\times n}$ such that $S^{0}=e^{X}$ for some $X\in \mathbb{R}^{n\times n}$, Sinkhorn's algorithm $(f,g)\in \mathbb{R}^{n}\times \mathbb{R}^{n}$ such that $S^{\infty}=\text{diag}(e^{f^{\infty}})S^{0}\text{diag}(e^{g^{\infty}})$ by alternatively applying row-wise and column-wise normalization in log domain, starting from $g^{0}=\mathbf{0}_{n}$,
\begin{equation}
\begin{split}
    f^{l+1}&= \log(\mathbf{1}_{n}/n)-\log(Se^{g^{l}}), \text{if}\ l\ \text{is even,}\\
    g^{l+1}&= \log(\mathbf{1}_{n}/n)-\log(S^{T}e^{f^{l}}), \text{if}\ l\ \text{is odd,}
\end{split}
\end{equation}
where $\log(Se^{g^{l}})$ and $\log(S^{T}e^{f^{l}})$ can be quickly computed by log-sum-exp. 

\end{document}


\section*{Appendix 1}

Each atom contains the following input features:

\begin{itemize}
    \item Atom types, an one-hot encoding indicating the atom types, such as: C, O, N, P, Cl and etc.; 
    
    \item Aromaticity, an one-hot encoding indicating the aromaticity of each atom;
    
    \item Electric charge, a 13-dimensional one-hot encoding indicating the electric charge of each atom (In range [-6, 6]); 
    
    \item Segment embedding, an one-hot encoding indicating which molecule that each atom belongs to;
    
    \item Reactant masking, an one-hot encoding indicating whether each molecule attend the reaction. This feature is only used in training and is not used in inference stage; (Reagents information.)
\end{itemize}

\paragraph{\textbf{Encoder Architecture}} Encoder architecture mainly adopts \textbf{GNN} and \textbf{transformer} to capture local and global interactions respectively. For GNN, we adopt a simple message passing mechanism by regarding each $E_{ij}$ value as weight. For transformer, we apply transformer encoder for molecular embedding and transformer decoder for applying cross attention between reactants and products. Specifically, encoding steps are as follows:
\begin{equation}
\begin{split}
    h^{R} &= E^{R}f^{R}, h^{P} = E^{P}f^{P},\\
    h^{R} &= \text{TransformerEncoder}(h^{R}),\\
    h^{P} &= \text{TransformerEncoder}(h^{P}),\\
    h^{z} &= \text{TransformerDecoder}(h^{R},h^{P}),\\
    \hat{h}^{z} &= h^{R} + h^{z},
\end{split}
\end{equation}
where $h^{R},h^{P},h^{z},\hat{h}^{z}\in R^{|V|\times dim}$, $dim$ denotes the number of latent dimension. $h^{z}$ will be fed into $\text{KL}$ loss function to force its latent variable to follow standard Gaussian distribution. $\hat{h}^{z}$ is reactants conditional latent used for decoding. 

\section*{Appendix 2} 

\paragraph{\textbf{Self-Attention $W^{+}$ and $W^{+}$ decoding}} We decode $W^{+}$ and $W^{+}$ from conditional latent $\hat{h}^{z}$ through self-attention module as follows:
\begin{equation}
\begin{split}
    Q^{+} = W^{Q,+}\hat{h}^{z}, K^{+} = W^{K,+}\hat{h}^{z},\\
    Q^{-} = W^{Q,-}\hat{h}^{z}, K^{-} = W^{K,-}\hat{h}^{z},\\
    W^{+} = \text{SoftMax}(\frac{Q^{+}(K^{+})^{T}}{\sqrt{d}}),\\
    W^{-} = \text{SoftMax}(\frac{Q^{-}(K^{-})^{T}}{\sqrt{d}}),\\
\end{split}
\end{equation}
where $Q^{+},Q^{-}$ are query matrices, $K^{+},K^{-}$ are key matrices, $W^{Q,+}, W^{Q,-}, W^{K,+}, W^{K,-}$ are trainable projection weight matrices, $\sqrt{d}$ is scaling factor where $d$ is equal to latent dimension. The above mechanism is actually same as the vanilla self-attention mechanism. 

\begin{lemma}
\label{proof:symmetry-induction}
Symmetrization on the final predicted product adjacency matrix $\hat{E}^{P}$ is equivalent to implicit symmetrization on the predicted electron redistribution $\Delta \hat{E}$.
\end{lemma}

\begin{proof}
    $Symm(\hat{E}^{P}) = \frac{\hat{E}^{P}+(\hat{E}^{P})^{T}}{2}$ 
    
    $= \frac{(E^{R}+\Delta \hat{E})+(E^{R}+\Delta \hat{E})^{T}}{2}$
    
    $= \frac{E^{R}+\Delta \hat{E}+(E^{R})^{T}+(\Delta \hat{E})^{T}}{2}$ 
    
    $= \frac{E^{R}+(E^{R})^{T}+\Delta \hat{E}+(\Delta \hat{E})^{T}}{2}$
    
    $= \frac{E^{R}+(E^{R})^{T}}{2} + \frac{\Delta \hat{E}+(\Delta \hat{E})^{T}}{2}$
    
    $= \frac{E^{R}+E^{R}}{2} + \frac{\Delta \hat{E}+(\Delta \hat{E})^{T}}{2}$ ($E^{R}$ is known to be symmetric) 
    
    $= E^{R}+\frac{\Delta \hat{E}+(\Delta \hat{E})^{T}}{2} = E^{R} + Symm(\Delta \hat{E})$
    
    Therefore, symmetrization on $\hat{E}^{P}$ in NERF is equivalent to implicit symmetrization on $\Delta \hat{E}$.
\end{proof}

\section*{Appendix 3}

\begin{lemma}
\label{proof:NERF-violation}
    Implicit symmetrization on $\Delta \hat{E}$ would result in that $\Delta \hat{E}$ violates the electron counting rule such that $\sum_{j}\Delta \hat{E}_{ij} \not= 0$ for some $i$. 
\end{lemma} 
\begin{proof}
    $\sum_{j}Symm(\Delta \hat{E})_{ij} = \sum_{j}(\frac{\Delta \hat{E}+(\Delta \hat{E})^{T}}{2})_{ij}$ (for a row with index $i$)
    
    $=\frac{1}{2}\sum_{j}((\Delta \hat{E})_{ij} + ((\Delta \hat{E})^{T})_{ij})$ 
    
    $=\frac{1}{2}\sum_{j}((\Delta \hat{E})_{ij} + (\Delta \hat{E})_{ji})$
    
    $=\frac{1}{2}(\sum_{j}(\Delta \hat{E})_{ij} + \sum_{j}(\Delta \hat{E})_{ji})$
    
    $=\frac{1}{2}(0 + \sum_{j}(\Delta \hat{E})_{ji})$ (row-wise conservative)
    
    $\not= 0$ (not column-wise conservative)
    
    Therefore, implicit symmetrization on $\Delta \hat{E}$ would corrupt the row-wise conservation of $\Delta \hat{E}$.
\end{proof}

\section*{Appendix 4}

\begin{theorem}
    If the self-attention mappings for bondformation $W^{+}$ and bondbreaking $W^{-}$ are doubly stochastic, then $\Delta \hat{E}$ would be guaranteed to be doubly conservative and its doubly conservation would not be corrupted by the symmetrization on $\hat{E}^{P}$. 
\end{theorem}

\begin{proof}
    In Lemma~\ref{proof:symmetry-induction}, we have already shown that symmetrization on $\hat{E}^{P}$ is equivalent to symmetrization on $\Delta \hat{E}$. 
    $\sum_{j}Symm(\Delta \hat{E})_{ij} = \sum_{j}(\frac{\Delta \hat{E}+(\Delta \hat{E})^{T}}{2})_{ij}$ (for a row with index $i$)
    
    $=\frac{1}{2}\sum_{j}((\Delta \hat{E})_{ij} + ((\Delta \hat{E})^{T})_{ij})$ 
    
    $=\frac{1}{2}\sum_{j}((\Delta \hat{E})_{ij} + (\Delta \hat{E})_{ji})$ 
    
    $=\frac{1}{2}(\sum_{j}(W^{+}-W^{-})_{ij} + \sum_{j}(W^{+}-W^{-})_{ji})$
    
    $=\frac{1}{2}(\sum_{j}W^{+}_{ij}-\sum_{j}W^{-}_{ij} + \sum_{j}W^{+}_{ji}-\sum_{j}W^{-}_{ji})$
    
    $=\frac{1}{2}(1-1+1-1)$ ($W^{+}$ and $W^{-}$ are doubly stochastic matrices)
    
    $= 0$ (row-wise conservative)
    
    Similarly, we can show that $\sum_{i}Symm(\Delta \hat{E})_{ij} = 0$ for any column $j$, which is column-wise conservative.
    
    Therefore, in this way, $\Delta \hat{E}$ is doubly conservative and its doubly conservative would not be corrupted by its implicit symmetrization, which is equivalent to symmetrization on $\hat{E}^{P}$. 
\end{proof}

\section*{Appendix 5}

\begin{corollary}
    If the multi-head self-attention mechanisms for bondformation $\sum_{d=1}^{D}W^{+d}$ and bondbreaking $\sum_{d=1}^{D}W^{-d}$ are doubly conservative with $D$-sum constraints, then $\Delta \hat{E}$ would be guaranteed to be doubly conservative and its doubly conservation would not be corrupted by the symmetrization on $\hat{E}^{P}$. 
\end{corollary}

\begin{proof}
    Since $\sum_{d=1}^{D}W^{+d}$ and $\sum_{d=1}^{D}W^{-d}$ are doubly conservative with D-sum constraints, such that $\sum_{j}\sum_{d=1}^{D}W^{+d}_{ij}=D$, $\sum_{i}\sum_{d=1}^{D}W^{+d}_{ij}=D$, $\sum_{j}\sum_{d=1}^{D}W^{-d}_{ij}=D$ and $\sum_{i}\sum_{d=1}^{D}W^{-d}_{ij}=D$.
    
    $\sum_{j}Symm(\Delta \hat{E})_{ij} = \sum_{j}(\frac{\Delta \hat{E}+(\Delta \hat{E})^{T}}{2})_{ij}$ (for a row with index $i$)
    
    $=\frac{1}{2}\sum_{j}((\Delta \hat{E})_{ij} + ((\Delta \hat{E})^{T})_{ij})$ 
    
    $=\frac{1}{2}\sum_{j}((\Delta \hat{E})_{ij} + (\Delta \hat{E})_{ji})$ 
    
    $=\frac{1}{2}(\sum_{j}(\Delta \hat{E})_{ij} + \sum_{j}(\Delta \hat{E})_{ji})$
    
    $=\frac{1}{2}(\sum_{j}(\sum_{d=1}^{D}W^{+d}-\sum_{d=1}^{D}W^{-d})_{ij} + \sum_{j}(\sum_{d=1}^{D}W^{+d}-\sum_{d=1}^{D}W^{-d})_{ji})$
    
    $=\frac{1}{2}(\sum_{j}\sum_{d=1}^{D}W^{+d}_{ij}-\sum_{j}\sum_{d=1}^{D}W^{-d}_{ij} + \sum_{j}\sum_{d=1}^{D}W^{+d}_{ji}-\sum_{j}\sum_{d=1}^{D}W^{-d}_{ji}$
    
    $=\frac{1}{2}(D-D+D-D)$ ($\sum_{d=1}^{D}W^{+d}$ and $\sum_{d=1}^{D}W^{-d}$ are doubly conservative with D-sum constraints.)
    
    $= 0$ (row-wise conservative)
    
    Similarly, we can show that $\sum_{i}Symm(\Delta \hat{E})_{ij} = 0$ for any column $j$, which is column-wise conservative.
    
    Therefore, in this way, $\Delta \hat{E}$ is doubly conservative and its doubly conservative would not be corrupted by its implicit symmetrization, which is equivalent to symmetrization on $\hat{E}^{P}$. Further, with this multi-head attention mechanism, $\Delta \hat{E}_{ij}$ is in range $[-D,D]$, in real implementations, $D=4$, which could cover enough support of electron changes in chemical reactions. 
\end{proof}

\section*{Appendix 6}

Given a positive matrix $S^{0}\in R^{n\times n}$ such that $S^{0}=e^{X}$ for some $X\in R^{n\times n}$, Sinkhorn's algorithm $(f,g)\in R^{n}\times R^{n}$ such that $S^{\infty}=\text{diag}(e^{f^{\infty}})S^{0}\text{diag}(e^{g^{\infty}})$ by alternatively applying row-wise and column-wise normalization in log domain, starting from $g^{0}=\mathbf{0}_{n}$,
\begin{equation}
\begin{split}
    f^{l+1}&= \log(\mathbf{1}_{n}/n)-\log(Se^{g^{l}}), \text{if}\ l\ \text{is even,}\\
    g^{l+1}&= \log(\mathbf{1}_{n}/n)-\log(S^{T}e^{f^{l}}), \text{if}\ l\ \text{is odd,}
\end{split}
\end{equation}
where $\log(Se^{g^{l}})$ and $\log(S^{T}e^{f^{l}})$ can be quickly computed by log-sum-exp. 